\begin{document}
\title{Primal Separation and Approximation\\ for the $\{0, 1/2\}$-closure}
%
%
\author{Lukas Brandl \and
Andreas S. Schulz 
}
\authorrunning{Brandl and Schulz}
%
\institute{Department of Mathematics, School of Computation, Information and Technology and Department of Operations \& Technology, School of Management, Technische Universität München, Germany \email{\{lukas.brandl,andreas.s.schulz\}@tum.de}}
\maketitle              
\begin{abstract}
We advance the theoretical study of $\{0, 1/2\}$-cuts for integer programming problems $\max\{c^T x \colon A x \leq b, x \text{ integer}\}$.  
Such cuts are Gomory-Chv\'atal cuts that only need multipliers of value $0$ or $1/2$ in their derivation. The intersection of all $\{0, 1/2\}$-cuts derived from $Ax \le b$ is denoted by $P_{1/2}$ and called the $\{0,1/2\}$-closure of $P = \{x : Ax \le b\}$. 

The primal separation problem for $\{0, 1/2\}$-cuts is: Given a vertex $\hat x$  of the integer hull of $P$ and some fractional point $x^* \in P$, does there exist a $\{0,1/2\}$-cut that is tight at $\hat x$ and violated by $x^*$? Primal separation is the key ingredient of primal cutting-plane approaches to integer programming. In general, primal separation for $\{0,1/2\}$-cuts is NP-hard. We present two cases for which primal separation 
is solvable in polynomial time. As an interesting side product, we obtain a(nother) simple proof that matching can be solved in polynomial time.  

Furthermore, since optimization over the Gomory-Chv\'atal closure is also NP-hard, there has been recent research on solving the optimization problem over the Gomory-Chv\'atal closure approximately. In a similar spirit, we show that the optimization problem over the $\{0,1/2\}$-closure can be solved in polynomial time up to a factor $(1 + \varepsilon)$, for any fixed $\varepsilon > 0$.



\keywords{Integer Programming \and Gomory-Chv\'atal Cuts \and \{0,1/2\}-Cuts \and Primal Separation \and Approximation}
\end{abstract}
%
%
\section{Introduction}

Let $P = \{x \in \mathbb{R}^n \colon Ax \leq b\}$ be a rational polyhedron with $A \in \mathbb{Z}^{m \times n}$ and $b \in \mathbb{Z}^m$. 
We are interested in solving $\max \{c^T x \colon x \in P \cap \mathbb{Z}^n \}$, which is equivalent to solving $\max \{c^T x \colon x \in P_I\}$. 
Here, $P_I = \text{conv}\{P \cap \mathbb{Z}^n\}$ is the integer hull of $P$. Inequalities of the form $\lambda^T A x \leq \lfloor \lambda^T b \rfloor$, where $\lambda \in \mathbb{R}_+^m$ s.th. $\lambda^T A$ is integer, are valid for $P_I$. Adding them to the description of $P$ strengthens the relaxation. These cuts are called Gomory-Chv\'atal cuts and were first introduced by Gomory \cite{Gomory1958} and Chv\'atal \cite{Chvatal1973}. By applying all valid Gomory-Chv\'atal cuts one obtains $P'$, the so-called Gomory-Chv\'atal closure:
    \begin{align*}
        P' &= \{x \in \mathbb{R}^n \colon \lambda^T A x \leq \lfloor \lambda^T b \rfloor, \lambda \in \mathbb{R}_+^m, \lambda^T A \in \mathbb{Z}^n\} \\
        &= \{x \in \mathbb{R}^n \colon \lambda^T A x \leq \lfloor \lambda^T b \rfloor, \lambda \in [0,1)^m, \lambda^T A \in \mathbb{Z}^n\}.
    \end{align*}
The second equality follows from the well-known fact that all other Gomory-Chv\'atal cuts are dominated by those with multipliers between $0$ and $1$, see, e.g.,~\cite{Conforti2014}.
The Gomory-Chv\'atal closure $P'$ is again a polyhedron \cite{Schrijver1980}, and it is clear that $P_I \subseteq P' \subseteq P.$

Many well-known examples of valid inequalities for combinatorial optimization problems turn out to be Gomory-Chv\'atal cuts with multipliers $\lambda \in \{0, \frac{1}{2}\}^m$, such as the blossom inequalities of the matching polytope \cite{Edmonds1965,Chvatal1973},
the odd-cycle inequalities of the stable set polytope \cite{Gerards1986,Padberg1973}, and the simple comb inequalities for the symmetric traveling salesman problem \cite{Groetschel1979}. 
Furthermore, in the case of the matching polytope, these inequalities 
already generate the integer hull \cite{Edmonds1965}. This motivated Caprara and Fischetti \cite{Caprara1996} to introduce the
$\{0, \frac{1}{2}\}$-closure,
\[
    P_{\frac{1}{2}} = \{x \in P \colon \lambda^T A x \leq \lfloor \lambda^T b \rfloor, \lambda \in \{0, \tfrac{1}{2}\}^m, \lambda^T A \in \mathbb{Z}^n\}.
\]
They showed that, in general, the separation problem (and therefore optimization \cite{Groetschel1981}) for $P_{\frac{1}{2}}$ is $\mathcal{NP}$-hard \cite{Caprara1996} (by a reduction
from the minimum-weight binary clutter problem).
Later, Letchford, Pokutta, and Schulz \cite{Letchford2011} strengthened this result to polytopes contained in the unit cube. Furthermore, Brugger and Schulz~\cite{Brugger2022} showed that deciding whether
a given inequality is valid for $P_{\frac{1}{2}}$ and the separation problem for $P_{\frac{1}{2}}$ are both $\mathcal{NP}$-hard, even if $0 \leq x \leq \mathbbm{1}$ is part of the input.

Eisenbrand \cite{Eisenbrand1999} observed that $P' = P_{\frac{1}{2}}$ holds for the construction by Caprara and Fischetti \cite{Caprara1996}. Therefore, the
separation problem for $P'$ is also $\mathcal{NP}$-hard \cite{Eisenbrand1999}. Cornu\'ejols, Lee, and Li \cite{Cornuejols2018b} showed that this is still true if $P \subseteq [0,1]^n$. They also gave some further hardness results and special cases for which the feasibility problem for $P'$ can be solved in polynomial time. The feasibility problem for $P'$ asks whether $P'$ contains a feasible point (i.e., $P'$ is non-empty).

Koster, Zymolka, and Kutschka \cite{Koster2009}\footnote{An earlier conference version appeared in \cite{ConfKosterZK07}} investigated how to solve the separation problem for $P_{\frac{1}{2}}$ in practice. They introduced preprocessing rules and formulated the remaining problem as an integer linear program. This approach significantly reduced computation time and has since been incorporated into CPLEX \cite{Koster2009}.

It is known that the feasibility problem (and, if $P \subseteq [0,1]^n$, many other problems \cite{Schulz2009}) for $P'$ is in $\mathcal{NP} \cap co\mathcal{NP}$,
if $P' = P_I$ (e.g., \cite{Boyd2009}). This also holds for $P_{\frac{1}{2}}$ if $P_{\frac{1}{2}} = P_I$.
Brugger and Schulz \cite{Brugger2022} showed that recognizing $P_{\frac{1}{2}} = P_I$ is
$\mathcal{NP}$-hard. Previously, Cornu\'ejols and Li \cite{Cornuejols2018}\footnote{An earlier conference version appeared in \cite{ConfCornuejolsL16a}} proved that deciding whether $P' = P_I$ is $\mathcal{NP}$-hard, even if $P_I = \emptyset$. 
Cornu\'ejols, Lee, and Li \cite{Cornuejols2018b} showed that recognizing $P' = P_I$ is still $\mathcal{NP}$-hard if $P \subseteq [0,1]^n$.

In addition to the theoretical significance of the equivalence between separation and optimization \cite{Groetschel1981}, the separation problem assumes a crucial role in both dual fractional cutting plane algorithms, e.g., \cite{Gomory1958}, and branch-and-cut algorithms, see, e.g., \cite{Nemhauser1988}. In the former, a fractional dual solution provided by an LP relaxation is maintained and driven towards integrality by adding cutting planes. The separation problem for a family $\mathcal{F}$ of valid inequalities for $P_I$ plays a key role in these procedures as it provides the cutting planes that are added to the relaxation.

\medskip
    \noindent \textbf{The separation problem for $\mathcal{F}$.} Given some $x^* \in \mathbb{R}^n$, find an inequality in $\mathcal{F}$ that is violated by $x^*$ or decide that none exists.
\medskip


On the other hand, there also exist primal cutting plane algorithms. In these algorithms, an integral feasible solution is maintained at all times. Cutting planes that are tight at this integral solution are added iteratively until a simplex-style pivot to an improving integral neighbor can be performed. For literature on primal cutting plane algorithms, see, e.g., \cite{Padberg1980,Young1968}, or \cite{Letchford2002} for a more recent account.
This approach necessitates a modified version of the separation problem:

\medskip    
\noindent \textbf{The primal separation problem for $\mathcal{F}$.} Given some $x^* \in \mathbb{R}^n$ and a vertex $\hat{x} \in P_I$, find an inequality in $\mathcal{F}$ that is tight at $\hat{x}$ and violated by $x^*$, or decide that none exists.
\medskip

The primal separation problem can always be reduced to the separation problem \cite{PadbergGroetschel1985}. It follows from work by Eisenbrand, Rinaldi, and Ventura \cite{Eisenbrand2003a}\footnote{An earlier conference version appeared in \cite{ConfEisenbrandPrimalSep}} and the polynomial-time equivalence of optimization and separation \cite{Groetschel1981} that the reverse holds for 0/1 polytopes. Furthermore, Letchford and Lodi \cite{letchford2003primal} showed how to reduce the separation problem  to the primal separation problem for certain families of polyhedra. These families include set packing, set covering, knapsack polyhedra and, interestingly, general integer programming. However, their construction does not cover, e.g., traveling salesman polytopes \cite{letchford2003primal}. Using this construction, they show that the primal separation problem for Gomory-Chv\'atal inequalities is $\mathcal{NP}$-hard. It is straightforward to see that their proof also implies the $\mathcal{NP}$-hardness of primal separation for $\{0, \frac{1}{2}\}$-cuts. In addition, they show for multiple families of inequalities that the primal separation problem is \textit{polynomially} easier than their standard separation counterpart.

In their seminal work \cite{Caprara1996}, Caprara and Fischetti 
also showed that the separation problem for $\{0, \frac{1}{2}\}$-cuts can be solved efficiently in the presence of box-constraints if there are at most two odd entries in each column or row of the constraint matrix $A$. 

As the first contribution of this work, we present two procedures that solve the primal separation problem for $\{0, \frac{1}{2}\}$-cuts in polynomial time in these cases. Although the result follows from prior work, specifically \cite{Caprara1996} together with \cite{PadbergGroetschel1985}, our constructions offer a direct, more straightforward, and combinatorial proof. Furthermore, they are conceptually simpler compared to the standard separation methods, as they rely on minimum cut and shortest path computations, in contrast to the standard separation procedures that require \textit{odd} cut and \textit{odd} path computations.

\begin{theorem} \label{thm:primal_sep_col}
    The primal separation problem for $\{0, \frac{1}{2}\}$-cuts can be solved by at most $m + 2n$ min-cut computations 
    if there are at most two odd entries per column of $A$.
\end{theorem}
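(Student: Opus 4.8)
The plan is to reduce primal separation to a short sequence of ordinary minimum-cut computations, exploiting that the reference point $\hat x$ is integral. I would describe $P$ by its main constraints $Ax\le b$ together with the (integer) bound constraints $l\le x\le u$, and consider a $\{0,\frac12\}$-cut obtained from a set $S$ of these constraints; validity requires $\sum_{i\in S} A_{ij}$ to be even in every column $j$. Let $\hat s_i$ and $s^*_i$ denote the slacks of $\hat x$ and $x^*$ in constraint $i$. First I would record three elementary facts: (i) since $\hat x$ and $A,b,l,u$ are integral, every $\hat s_i$ is a nonnegative integer; (ii) the cut is tight at $\hat x$ exactly when $\sum_{i\in S}\hat s_i=1$, i.e.\ $S$ contains precisely one constraint with $\hat s_i=1$, all other members being tight and none having $\hat s_i\ge 2$; and (iii) given the parity condition together with (ii), $\sum_{i\in S} b_i$ is automatically odd, and the cut is violated by $x^*$ iff $\sum_{i\in S}s^*_i<1$. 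Thus primal separation becomes: find a parity-feasible $S$ with exactly one slack-$1$ member minimising $\sum_{i\in S}s^*_i$, and accept iff this minimum is below $1$.

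The decisive simplification is that the integrality of $\hat x$ converts the ``tight at $\hat x$'' requirement into the purely combinatorial constraint that exactly one chosen constraint has slack $1$. I would therefore guess this distinguished constraint $t$: there are at most $m+2n$ candidates, namely those among the $m$ main rows and the $2n$ bound constraints whose $\hat x$-slack equals $1$. For each fixed $t$, it remains to find a minimum-$s^*$ set $S\ni t$ that uses, besides $t$, only constraints tight at $\hat x$ and that satisfies the column-parity condition.

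Each such subproblem I would phrase as a minimum cut. Build a graph on a root $r$ together with one vertex $v_i$ per eligible (tight) main row; a set $S$ of main rows corresponds to the side of the cut not containing $r$. For every main row I add an edge $\{v_i,r\}$ of weight $s^*_i$, so that selecting row $i$ pays exactly its $x^*$-slack. For every column $j$ I add a single edge carrying the cost of repairing an odd coefficient sum by a bound on $x_j$: between the two odd rows of column $j$ (weight the cheapest $x^*$-slack of a usable bound on $x_j$), or from the single odd row to $r$, so that the edge is cut precisely when column $j$ has odd main-parity and a bound must be spent. Here the hypothesis that each column of $A$ has at most two odd entries is exactly what guarantees that every column produces at most one edge, so the parity conditions collapse into a genuine graphic cut rather than a hypergraph constraint. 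The total weight of a cut then equals the total $x^*$-slack of the corresponding $S$, and forcing $t\in S$ amounts to taking $v_t$ and $r$ as the two terminals; a minimum $v_t$--$r$ cut returns the optimal $S$ for that candidate.

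Finally I would run this over all $\le m+2n$ candidates and accept iff some minimum cut has value $<1$, reading off the separating inequality from the optimal $S$. The step I expect to be most delicate is the cut encoding of the previous paragraph: proving that selecting a main row (a vertex cost) and repairing a column parity with a bound (a coupling between that column's odd rows) combine into a single nonnegative edge-weighted cut, and in particular handling the forced element $t$ when $t$ is itself a bound. Its mandatory selection flips the target parity of one column, which must be absorbed into the choice of terminals (or a local relabeling) so that the instance stays an ordinary minimum cut with nonnegative weights. Everything else is bookkeeping, and the $m+2n$ cut computations give the claimed bound.
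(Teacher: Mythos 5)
Your proposal is correct and follows essentially the same route as the paper: the same structural lemma (exactly one slack-$1$ constraint, all others tight, violation iff total $x^*$-slack $<1$), the same enumeration of the distinguished slack-$1$ constraint, and the same min-cut graph with root edges weighted by row slacks and column-parity edges weighted by bound slacks. The one step you flag as delicate --- a bound constraint serving as the slack-$1$ member, which flips one column's parity --- is resolved in the paper exactly by your ``choice of terminals'' idea: for that coordinate one enumerates which of the (at most two) tight rows with an odd entry there is forced to accompany the bound, and this enumeration is what produces the $2n$ term in the $m+2n$ bound.
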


\begin{theorem} \label{thm:primal_sep_row}
    The primal separation problem for $\{0, \frac{1}{2}\}$-cuts can be solved by at most $m + n$ shortest-path computations 
    if there are at most two odd entries per row of $A$.
\end{theorem}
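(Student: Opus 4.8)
The plan is to reduce the primal separation problem for $\{0,\frac{1}{2}\}$-cuts with at most two odd entries per row to a collection of shortest-path computations in a suitably constructed auxiliary graph. First I would set up the algebraic framework: a $\{0,\frac{1}{2}\}$-cut is determined by a multiplier vector $\lambda \in \{0,\frac{1}{2}\}^m$, equivalently by a subset $S \subseteq \{1,\dots,m\}$ of chosen rows, subject to the integrality condition that $\frac{1}{2}\sum_{i \in S} A_i$ is integral, i.e.\ $\sum_{i \in S} A_i \equiv 0 \pmod 2$ componentwise. Working mod $2$, let $\bar A \in \mathbb{F}_2^{m \times n}$ be the entrywise parity of $A$. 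The integrality requirement becomes: the sum (over $\mathbb{F}_2$) of the selected rows of $\bar A$ must be the zero vector. When each row of $A$ has at most two odd entries, each row of $\bar A$ has at most two nonzero entries, which is exactly the incidence structure of (the edges of) a graph $G$ on the vertex set $\{1,\dots,n\}$ of columns: a row with odd entries in columns $j$ and $k$ becomes an edge $\{j,k\}$, and a row with a single odd entry in column $j$ becomes an edge from $j$ to an auxiliary ``ground'' vertex. The condition $\sum_{i\in S}\bar A_i = 0$ then says that the selected edges form an even subgraph, i.e.\ a cycle space element of $G$.

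Next I would incorporate the primal requirement using the given vertex $\hat x \in P_I$. A $\{0,\frac{1}{2}\}$-cut $\frac{1}{2}\lambda^T A x \le \lfloor \frac{1}{2}\lambda^T b\rfloor$ is tight at $\hat x$, and the crucial observation is that tightness constrains which rows may be included: since $\hat x$ is integral and feasible, each original constraint $A_i x \le b_i$ has an integer slack $b_i - A_i \hat x \ge 0$, and one computes that the cut is tight at $\hat x$ precisely when every selected row is tight or near-tight at $\hat x$ (concretely, when the slack contributed by $S$ leaves no room to round down further). This partitions the rows into those that are ``admissible'' for a tight cut and those that are forbidden, and it pins down the parity of $\sum_{i\in S} b_i$ that makes the floor operation actually strengthen the inequality. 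I would encode these admissibility and parity conditions as a requirement that the chosen even subgraph of $G$ be an \emph{odd} cycle through the ground vertex, or equivalently a cut separating an appropriately labeled pair.

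The violation condition for $x^*$ then becomes the objective to minimize: the amount by which $x^*$ violates the cut, summed appropriately over the edges, is a nonnegative weight on the edges of $G$, and finding the most-violated admissible cut amounts to finding a minimum-weight odd cycle (through the ground vertex) in $G$. This is the classical reduction solved by shortest-path computations: one builds a bipartite double cover of $G$ (two copies, with edges crossing between copies to enforce the odd-length parity), assigns the violation weights, and computes shortest paths between the two copies of each relevant vertex. Doing this for each of the $n$ column-vertices, together with the ground vertex and the per-row tightness checks, yields the claimed bound of $m + n$ shortest-path computations.

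The main obstacle I expect is the second step: correctly translating the primal tightness condition at $\hat x$ into a clean combinatorial constraint on the selected edge set, and verifying that the floor in the cut's right-hand side interacts with integrality of $\hat x$ so that ``tight at $\hat x$'' reduces exactly to an odd-parity / shortest-path condition rather than something messier. In the standard (non-primal) separation of Caprara and Fischetti this is where \emph{odd} paths enter; the payoff claimed in the introduction is that the primal version replaces odd-path computations with ordinary shortest paths, so the delicate part is showing that fixing $\hat x$ removes exactly the parity complication and reduces each subproblem to a plain shortest-path computation, while the accounting of admissible rows and ground-vertex handling produces the stated $m + n$ bound.
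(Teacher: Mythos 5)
Your first step (rows as edges of a graph on column-nodes plus a ground vertex, integrality of $\lambda^T A$ as the even-degree/cycle-space condition) matches the paper's construction. But there is a genuine gap in the second and third steps, and it is exactly at the point you yourself flag as "the main obstacle." The paper's key ingredient is a precise structural lemma: a non-trivial $\{0,\frac{1}{2}\}$-cut is tight at $\hat{x}$ if and only if $\lambda^T(b - A\hat{x}) = \frac{1}{2}$, i.e.\ \emph{every} selected row is tight at $\hat{x}$ except \emph{exactly one} row $j^*$ whose slack is \emph{exactly} $1$. Your formulation ("tight or near-tight," "leaves no room to round down further," a parity condition on $\sum_{i \in S} b_i$) never pins this down, and the reduction cannot be completed without it. Once you have the lemma, non-triviality is automatic and no parity condition on the cycle survives: you enumerate the candidate slack-$1$ inequality $j^*$ (at most $m$ rows plus $n$ box constraints, which is where the $m+n$ count comes from --- not from iterating over column vertices as you propose), build the graph from the \emph{tight} rows plus $j^*$ only, and look for the shortest cycle through the edge of $j^*$, which is a plain shortest path between that edge's endpoints.

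Your proposal instead encodes tightness as an odd-cycle condition and invokes the bipartite double cover. This reintroduces the machinery of standard separation that the primal setting is supposed to eliminate, and moreover it is incorrect: "odd" (in any slack-parity sense) is not equivalent to "tight at $\hat{x}$." A cycle whose selected rows have total $\hat{x}$-slack $3$ (or $5$, \dots) has odd parity and yields a non-trivial cut, which may well satisfy the violation condition $\lambda^T(b - Ax^*) < \frac{1}{2}$, yet this cut is \emph{not} tight at $\hat{x}$ (it passes at distance at least $1$ below $\lfloor \lambda^T b\rfloor$ at $\hat{x}$). Tightness requires total slack exactly $1$, a cardinality condition, not a parity condition. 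So a minimum-weight odd cycle found via the double cover can be a cut that is violated but not tight; your algorithm would then return an invalid answer to the primal separation problem, possibly in instances where the correct answer is that no tight violated cut exists. Restricting the graph to tight rows plus a single enumerated slack-$1$ row --- the paper's route --- is what makes every cycle through the $j^*$-edge automatically correspond to a tight, non-trivial cut, and reduces each subproblem to an ordinary shortest-path computation with nonnegative lengths.
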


If, additionally, $P_{\frac{1}{2}} = P_I \subseteq [0,1]^n$ holds in these cases, it follows from \cite{Eisenbrand2003a} (see also \cite{Schulz2009}) that the optimization problem over $P_{\frac{1}{2}} = P_I$ can be solved in polynomial time.

In addition to the result mentioned above, Eisenbrand, Rinaldi, and Ventura \cite{Eisenbrand2003a} provided primal separation procedures for the odd-cut inequalities of the perfect matching polytope, the odd-cycle inequalities of the stable set polytope, the cycle inequalities of the maximum cut polytope, and the odd-cycle inequalities of the maximum bipartite graph polytope that are conceptually simpler than their standard separation counterparts. The same is true  for our procedures when comparing them to the standard separation counterparts given in Caprara and Fischetti \cite{Caprara1996}. 


As a side product, our construction shows that the matching problem can be solved in polynomial time. An analogous result for the perfect matching problem is obtained in \cite{Eisenbrand2003a}, but whereas the method outlined in \cite{Eisenbrand2003a} is specifically tailored for the separation of the odd-cut inequalities of the perfect matching polytope, our construction exhibits a broader applicability. It can be applied to general integer matrices with a maximum of two odd entries per column, general integer right-hand sides, and it is adaptable to any subset of the box-constraints.

In the second part of this work we study approximate optimization over $P_{\frac{1}{2}}$.
While exact optimization over $P'$ is $\mathcal{NP}$-hard, Bienstock and Zuckerberg \cite{Bienstock2005} introduced a hierarchy that gives a $(1 + \varepsilon)$-approximation\footnote{A $(1 + \varepsilon)$-approximation in the context of \cite{Bienstock2005,Fiorini2020,Mastrolilli2020} means a relaxation $Q$ such that $\min\{c^T x \vert x \in P'\} \leq (1 + \varepsilon) \min\{c^Tx \vert x \in Q\}$ for all $c \in \mathbb{Z}^n_+$, and the minimization problem over $Q$ can be solved in polynomial time.} of $P'$ for polytopes of the form $P = \{x \in [0,1]^n \colon Ax \geq \mathbbm{1}\}$,
where $A \in \{0,1\}^{m \times n}.$ Fiorini, Huynh, and Weltge \cite{Fiorini2020} provide a simpler construction based on Boolean formulas that also achieves the same approximation guarantee. 
At the same time, Mastrolilli \cite{Mastrolilli2020}\footnote{An earlier conference version appeared in \cite{ConfMastrolilli}} showed how to obtain the same result by using a modified Sherali-Adams hierarchy involving high-degree polynomials. 
They further showed that the standard Lasserre hierarchy gives a $(1+\varepsilon)$-approximation\footnote{They construct a relaxation $Q$ such that 
$\max\{c^T x \vert x \in Q\} \leq (1 + \varepsilon) \max\{c^Tx \vert x \in P'\}$ for all $c \in \mathbb{Z}^n_+$, and the maximization problem over $Q$ can be solved in polynomial time.} of $P'$ for monotone polytopes~$P$.

The second contribution of this work consists of a relatively simple $(1 + \varepsilon)$-approximation of $P_{\frac{1}{2}}$ if the right-hand side vector $b$ is strictly positive. 
That is, we construct a polyhedron $\Tilde{P}$ such that $P_{\frac{1}{2}} \subseteq \tilde{P} \subseteq (1 + \varepsilon) P_{\frac{1}{2}}$, where $(1 + \varepsilon) P = \{(1 + \varepsilon) x \colon x \in P\}$. Hence, it follows that $\max\{c^T x \colon x \in P_{\frac{1}{2}}\} \leq \max\{c^T x \colon x \in \tilde{P}\}\leq (1 + \varepsilon) \max\{c^T x \colon x \in P_{\frac{1}{2}}\}$. Furthermore, the maximization problem over $\tilde{P}$ can be solved in polynomial time.
Our construction also contains the case of $P$ monotone because we can w.l.o.g. assume that $b >0$ by projecting out variables and in contrast to the result by Mastrolilli \cite{Mastrolilli2020}, $A$ does not have to be non-negative.

%
\section{Primal separation for $\{0, \frac{1}{2}\}$-cuts in the case of at most two odd entries per row or column}
Eisenbrand, Rinaldi, and Ventura \cite{Eisenbrand2003a} showed that primal separation and optimization (and, therefore, separation \cite{Groetschel1981}) are polynomial-time equivalent for 0/1-polytopes.
They also give some examples showing that primal separation can be conceptually much simpler than standard separation. One of their examples is the perfect matching polytope, for which they show that the primal separation problem can be solved by computing $\frac{\vert V \vert}{2}$ maximum flows, where $V$ denotes the set of nodes of the given graph. This stands in contrast to the standard separation problem for the matching polytope, which involves solving a minimum-weight odd cut problem. Minimum-weight odd cut algorithms, such as the first one proposed by Padberg and Rao \cite{Padberg1982}, are based on intricate observations, whereas the algorithm described in \cite{Eisenbrand2003a} is elementary.

Caprara and Fischetti \cite{Caprara1996} reduced the separation problem for the $\{0, \frac{1}{2}\}$-closure to instances of the minimum-weight binary clutter problem. 
Then, if there are at most two odd entries per column (row), they can apply the algorithm by Padberg and Rao \cite{Padberg1982} (Gerards and 
Schrijver \cite{Gerards1986}) to solve the corresponding minimum-weight binary clutter problem in polynomial time.

In the same spirit as \cite{Eisenbrand2003a}, we show that the primal separation problem for $P_{\frac{1}{2}}$ can be solved in polynomial time in those two cases (i.e., at most two odd entries per column or row) by providing two simple constructions based on minimum-cut and shortest-path computations, respectively.

To derive their primal separation procedure, Eisenbrand, Rinaldi, and Ventura \cite{Eisenbrand2003a} exploit that the description of the matching polytope is known \cite{Edmonds1965}. In the case
of the perfect matching polytope, this is equivalent to assuming $P_{\frac{1}{2}} = P_I$. We, however, derive the graphs for our procedures directly from $A$ and therefore obtain a more general construction.

To establish Theorems \ref{thm:primal_sep_col} and \ref{thm:primal_sep_row}, we need the following, widely known lemma, which provides some insight into the structure of non-trivial $\{0, \frac{1}{2}\}$-cuts that are tight at a given integer vertex $\hat{x}$. A $\{0, \frac{1}{2}\}$-cut is called non-trivial if $\lambda^T b \notin \mathbb{Z}$. Furthermore, original inequalities are \textit{involved} in a $\{0, \frac{1}{2}\}$-cut if the corresponding multiplier is non-zero.

\begin{lemma}\label{lemma:psep_cut_structure}
    Given a polytope $P = \{x \in \mathbb{R}^n \colon Ax \leq b\}$, with $A$ and $b$ integer, an integer vertex $\hat{x}$ of $P_{\frac{1}{2}}$, and $\lambda \in \{0, \tfrac{1}{2}\}^m$
    such that $\lambda^T A \in \mathbb{Z}^n$. Then $\lambda^T A x \leq \lfloor \lambda^T b \rfloor$ is a non-trivial $\{0, \frac{1}{2}\}$-cut 
    that is tight at $\hat{x}$ if and only if
    \[
        \lambda^T (b - A \hat{x}) = \frac{1}{2}.
    \]
    In particular, the entries in $\lambda$ are zero for all inequalities that are not tight at $\hat{x}$ except 
    for exactly one that has slack exactly $1$.
\end{lemma}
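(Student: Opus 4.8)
The plan is to analyze everything through the integer \emph{slack vector} $s := b - A\hat{x}$. Since $\hat{x} \in P_{\frac{1}{2}} \subseteq P$ we have $A\hat{x} \le b$, and because $A$, $b$, and $\hat{x}$ are all integer, $s$ is a nonnegative integer vector. The decisive observation, which I would isolate first, is that $\lambda^T A \hat{x} \in \mathbb{Z}$: this holds because $\lambda^T A \in \mathbb{Z}^n$ by assumption and $\hat{x} \in \mathbb{Z}^n$. Consequently the integer part of $\lambda^T b$ is carried entirely by $\lambda^T A \hat{x}$, and I can write $\lambda^T b = \lambda^T A \hat{x} + \lambda^T s$ with $\lfloor \lambda^T b \rfloor = \lambda^T A \hat{x} + \lfloor \lambda^T s \rfloor$.

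Next I would translate the two defining properties of the cut into conditions on the single quantity $\lambda^T s$. Non-triviality, $\lambda^T b \notin \mathbb{Z}$, is equivalent to $\lambda^T s \notin \mathbb{Z}$ by the above decomposition. Since $\lambda \in \{0, \frac{1}{2}\}^m$ and $s$ is integer, $\lambda^T s = \frac{1}{2}\sum_{i : \lambda_i = 1/2} s_i$ is a multiple of $\frac{1}{2}$; hence non-triviality is equivalent to $\sum_{i : \lambda_i = 1/2} s_i$ being odd, i.e. $\lambda^T s \in \{\frac{1}{2}, \frac{3}{2}, \dots\}$. Tightness, $\lambda^T A \hat{x} = \lfloor \lambda^T b \rfloor$, becomes $\lfloor \lambda^T s \rfloor = 0$ after substituting the decomposition, i.e. $0 \le \lambda^T s < 1$. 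Intersecting the two constraints leaves exactly $\lambda^T s = \frac{1}{2}$, which is the claimed equality; the converse direction is immediate since $\lambda^T s = \frac{1}{2}$ is both nonintegral and lies in $[0,1)$.

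For the final ``in particular'' statement I would unwind $\lambda^T s = \frac{1}{2}$ into $\sum_{i : \lambda_i = 1/2} s_i = 1$. Because each $s_i$ is a nonnegative integer, this sum of the slacks over the involved rows equals $1$ only if exactly one involved row has slack $s_i = 1$ and every other involved row has slack $s_i = 0$ (i.e. is tight). Equivalently, any row that is not tight at $\hat{x}$ (so $s_i \ge 1$) must have $\lambda_i = 0$, with the sole exception of one row whose slack is precisely $1$, which is exactly what is asserted.

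There is no genuine obstacle here; the proof is a short integrality argument. The only point requiring care is the bookkeeping around the floor: one must use that $\lambda^T A \hat{x}$ is an \emph{integer} to split $\lfloor \lambda^T b\rfloor$ additively, and that the half-integrality of $\lambda^T s$ rules out every value in $[0,1)$ other than $0$ and $\frac{1}{2}$. This is why the two seemingly independent conditions, non-triviality and tightness, collapse to the single equation.
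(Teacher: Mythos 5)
Your proof is correct and follows essentially the same route as the paper's: both work with the integer slack vector $s = b - A\hat{x}$, use that $\lambda^T A \hat{x} \in \mathbb{Z}$, and exploit the half-integrality of $\lambda^T s$ to force the value $\tfrac{1}{2}$. The only (cosmetic) difference is that you phrase it as two clean equivalences via the floor decomposition $\lfloor \lambda^T b \rfloor = \lambda^T A \hat{x} + \lfloor \lambda^T s \rfloor$ and intersect them, whereas the paper runs a three-way case analysis on $\lambda^T s \in \{0\} \cup \{\tfrac{1}{2}\} \cup [1,\infty)$; you also spell out the ``in particular'' claim, which the paper leaves implicit.
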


\begin{proof}
    Let $\hat{s} = b - A \hat{x} \in \mathbb{Z}^m$. Then
    \[
        \lambda^T A \hat{x} = \lambda^T (b - \hat{s}) \in \mathbb{Z}.
    \]
    Suppose that only tight inequalities are involved in $\lambda^T A x \leq \lfloor \lambda^T b \rfloor$, i.e., $\lambda^T \hat{s} = 0$. Then $\lambda^T b \in \mathbb{Z}$, thus the cut is trivial.
    Assuming the slack adds up to more than $1$, i.e., $\lambda^T \hat{s} \geq 1$, then $\lambda^T (b - \hat{s}) < \lfloor \lambda^T b \rfloor$, and the cut is not tight at $\hat{x}$.
    Finally, if $\lambda^T \hat{s} = \frac{1}{2}$, we obtain a non-trivial $\{0, \frac{1}{2}\}$-cut that is tight at $\hat{x}$, because $\lambda^T (b - \hat{s}) = \lfloor \lambda^T b\rfloor$. \qed
\end{proof}

Furthermore, a non-trivial $\{0, \frac{1}{2}\}$-cut is violated by $x^*$, if $\lambda^T A x^* > \lfloor \lambda^T b \rfloor = \lambda^T b - \frac{1}{2}$, i.e. if
\begin{equation}\label{eq:cut_off}
    \lambda^T (b - A x^*) < \frac{1}{2} .
\end{equation}

In the following we are considering polytopes of the form $P = \{x \in \mathbb{R}^n \colon A x \leq b, 0 \leq x \leq \mathbbm{1}\}$.

Let $\lambda \in \{0, \frac{1}{2}\}^m$. If for a coordinate $i \in [n]$, $(\lambda^T A)_i \notin \mathbb{Z}$, the constraints $0 \leq x_i \leq 1$ allow us to \textit{repair} this coordinate, by either rounding up
or down. Note that this might introduce some slack in \eqref{eq:cut_off}. Furthermore, as indicated by Lemma \ref{lemma:psep_cut_structure}, 
we are only allowed to round down (up) if $\hat{x}_i = 0$ ($\hat{x}_i = 1$), if the slack inequality is chosen from $Ax \leq b$.
It follows that the rounding operation is completely determined by $\hat{x}$ and the choice of $\lambda$.
We will denote the corresponding \textit{rounding multipliers} as $\mu_{\text{down}}, \mu_{\text{up}} \in \{0, \frac{1}{2}\}^n$.

Note that we can assume $\mu_{\text{down}}^T \mu_{\text{up}}=0$. If there was a coordinate $i$ with $(\mu_{\text{down}})_i = (\mu_{\text{up}})_i = \frac{1}{2}$, then setting $(\mu_{\text{down}}) = (\mu_{\text{up}}) = 0$ would result in a stronger cut.
Therefore, choosing a box constraint as the slack inequality is equivalent to forcing a rounding operation, i.e., $(\lambda^T A)_i$ should be fractional.
%
%
\subsection{At most two odd entries per column}

In the following, we will identify the non-trivial $\{0, 1\}$-cuts containing $j^*$ with $j^*$-$t$-cuts in a graph $G=(V,E)$, which is constructed as follows. Here and in the following, $j^*$ will denote the slack inequality from Lemma \ref{lemma:psep_cut_structure}.

We will associate each inequality that is tight at $\hat{x}$ and the slack inequality $j^*$ with a node of a graph  (in addition to a special node $t$).

An edge exists between two inequality nodes if they both have an odd entry in the same coordinate. This edge represents that particular coordinate. If it is contained in a $j^*$-$t$-cut in the graph, only one of the two inequalities is involved, and as a result, the corresponding coordinate must be rounded up or down. The slack that is introduced in \eqref{eq:cut_off} is represented by the capacity of this edge.

Furthermore, there will be edges between each node and $t$ representing the slack introduced in \eqref{eq:cut_off} by the selection of this constraint. 

Additionally, an edge connects a node corresponding to an inequality to node $t$ if there is only one tight inequality containing the corresponding coordinate with an odd coefficient. These edges are then assigned their corresponding slack from the rounding operation, as selecting the inequality associated with such a node necessitates rounding the respective coordinate.

Altogether, if we define a $\{0, 1\}$-cut by $\lambda = \frac{1}{2}\chi_U$, where $\chi_U$ denotes the incidence vector of $j^* \in U \subseteq V \setminus t$, and $\mu_{\text{down}}$ and $\mu_{\text{up}}$ as described above, we can identify the non-trivial $\{0, 1\}$-cuts containing $j^*$ with the $j^*$-$t$-cuts in $G$. Furthermore, the capacity $\frac{c(\delta(U))}{2}$ will be equal to the left-hand side in \eqref{eq:cut_off}. More formally

\begin{proof}[of Theorem \ref{thm:primal_sep_col}]
    Let $\mathcal{S} = \{j \in [m] \colon  a_j^T \hat{x} = b_j - 1\}$ and $\mathcal{T} = \{j \in [m] \colon  a_j^T \hat{x} = b_j\}$ and,
    at first, assume that the inequality $j^*$ with slack $1$ from Lemma \ref{lemma:psep_cut_structure} comes from $\mathcal{S}$.
    
    Let $V_1 = \mathcal{T}\, \dot\cup \, \{j^*\}$, $V = V_1 \, \dot\cup \, \{t\}$ be the nodes of the graph.
    The edges representing the \textit{cost} of selecting constraint $v$ are
    \[
        E_1 = \{\{v,t\} \vert \, v \in V_1 \}.
    \]
    The edges between nodes that represent the sharing of an odd coordinate are
    \[
        E_2 = \{\{v,w\}^i \vert \, v,w \in V_1, v \neq w, i\in[n]\colon a_{v,i} \equiv a_{w,i} \equiv 1 \pmod{2}\}.
    \]
    Finally, there might be constraints with an odd entry that do not have an odd \textit{counterpart}. If such a constraint is selected, the corresponding coordinate will need to be rounded. The following edges represent this situation:
    \[
        E_3 = \{\{v,t\}^i \vert \, v \in V_1, i \in [n] \ \forall w \in V_1 \setminus \{v\}\colon a_{v,i} \equiv 1 \land a_{w,i} \equiv 0 \pmod{2}\}
    \]
    The set of edges will be the union of the sets we just defined, that is,
    \[
        E = E_1 \, \dot\cup \, E_2 \, \dot\cup \, E_3
    \]
    Their capacities are given by
    \[
        c(e) = \begin{cases}
            (b_v - a_v^T x^*) &\text{if } e \in E_1 \\
            x_i^* &\text{if } e \in E_2 \cup E_3, \hat{x}_i = 0 \\
            1 - x_i^* &\text{if } e \in E_2 \cup E_3, \hat{x}_i = 1.
        \end{cases}
    \]
    Notice that the cuts $\delta(U)$, where $j^* \in U \subseteq V \setminus t$, are in one-to-one correspondence with the non-trivial $\{0, \frac{1}{2}\}$-cuts that are tight at $\hat{x}$ and contain constraint $j^*$.
    Furthermore, $c(\delta(U)) = \chi_U^T(b - A x^*) + 2\mu_{\text{down}} (0 - (-x^*)) + 2\mu_{\text{up}} (1 - x^*)$, where $\mu_{\text{up}}$ and $\mu_{\text{down}}$ represent rounding up or down, as described above. 
    
    Then there exists a non-trivial $\{0, \frac{1}{2}\}$-cut that is tight at $\hat{x}$, contains $j^*$ and cuts off $x^*$ if and only if there exists a $j^*$-$t$-cut with capacity < $1$.
    
    Assuming that one of the box constraints is the slack inequality, consider the tight inequalities from $\mathcal{T}$ with an odd $i$-th coefficient. Select one of them and proceed as previously described, without considering the potential other inequality (there are at most two).
    This approach is sufficient because simultaneously selecting both the upper and lower box constraints weakens a cut and therefore does not need consideration.

    In total, at most $m + 2n$ minimum capacity cuts need to be determined. \qed
\end{proof}

Actually, the box-constraints are not necessary in the construction above. The possible absence of (some of) them can be treated as follows. If for a coordinate no rounding operation is \textit{available} via a tight box-constraint, the corresponding nodes have to be contracted. Note that this might lead to infeasible sub-problems, but does not affect the correctness of the procedure.

As mentioned earlier, the $\{0, \frac{1}{2}\}$-closure of the the fractional matching polytope is equal to its integer hull \cite{Edmonds1965}. Furthermore, its constraint matrix $A \in \{0,1\}^{\vert V \vert \times \vert E \vert}$ is the incidence matrix of the graph $G = (V, E)$ of the problem, i.e., it has exactly two ones per column. Thus, Theorem \ref{thm:primal_sep_col} is applicable and it follows from \cite{Eisenbrand2003a} that the matching problem can be solved by at most $\vert V \vert + 2 \vert E \vert$ min-cut computations. 

\begin{corollary}
    The matching problem can be solved in polynomial time by computing at most $\vert V \vert + 2 \vert E \vert$ minimum cuts.
\end{corollary}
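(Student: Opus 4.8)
The plan is to present the matching problem as optimization over a $0/1$ polytope that falls under Theorem~\ref{thm:primal_sep_col}, and then to route the resulting primal-separation procedure through the optimization-from-primal-separation equivalence of Eisenbrand, Rinaldi, and Ventura~\cite{Eisenbrand2003a}. First I would record the structural facts that make everything fit. The vertices of the matching polytope $P_I$ are the $0/1$ incidence vectors of the matchings of $G=(V,E)$, so $P_I$ is a $0/1$ polytope; by Edmonds~\cite{Edmonds1965} the degree constraints, nonnegativity, and the blossom inequalities describe $P_I$, and since each blossom inequality for an odd set $S$ arises by taking multiplier $\frac12$ on the degree constraints of $S$ and rounding the $\delta(S)$-edges via nonnegativity, every blossom is a non-trivial $\{0,\frac12\}$-cut of the fractional matching polytope. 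Together with $P_I\subseteq P_{\frac12}$ this gives $P_{\frac12}=P_I$. The defining matrix $A$ is the node--edge incidence matrix of $G$, so every column has exactly two (hence at most two odd) entries, and the dimensions are $m=|V|$ and $n=|E|$, whence $m+2n=|V|+2|E|$. Thus Theorem~\ref{thm:primal_sep_col} applies and supplies a primal-separation procedure for $P_{\frac12}$ built on the graph constructed in its proof.

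Next I would feed this routine into~\cite{Eisenbrand2003a}: for $0/1$ polytopes primal separation and optimization are polynomially equivalent, so a primal-separation oracle for $P_{\frac12}=P_I$ converts into a polynomial-time algorithm for $\max\{c^Tx : x\in P_I\}$, which is weighted matching. This already delivers the polynomial-time assertion of the corollary, and it is the exact analogue of the route taken in~\cite{Eisenbrand2003a} for the perfect matching polytope, with our construction—derived directly from $A$—replacing their tailored one.

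The delicate part is the min-cut count: I must show that the \emph{entire} optimization is carried out with at most $|V|+2|E|$ min-cut computations in total, and not merely that many per primal-separation call. Here I would read the budget off the proof of Theorem~\ref{thm:primal_sep_col}: once the integer vertex $\hat x$ is fixed, the primal-separation graph is fixed, and the $m+2n=|V|+2|E|$ min-cut computations are in one-to-one correspondence with the admissible choices of the slack inequality $j^*$ (one per tight degree constraint) together with the at most $2n$ box-constraint roundings. The crux is then to argue—following the structure of the matching reduction in~\cite{Eisenbrand2003a}—that optimizing over $P_I$ amounts to a single pass through this family, with the augmenting direction extracted directly from the minimum $j^*$--$t$-cut, so that the $|V|+2|E|$ computations are not multiplied by the number of augmentation steps. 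Verifying that this single-pass accounting is legitimate, rather than the generic polynomially-many-calls bound furnished by the equivalence theorem, is the step I expect to be the main obstacle.
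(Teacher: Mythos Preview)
Your first two paragraphs reproduce the paper's argument exactly: Edmonds gives $P_{\frac12}=P_I$, the node--edge incidence matrix has two ones per column so Theorem~\ref{thm:primal_sep_col} applies with $m+2n=|V|+2|E|$, and the optimization-from-primal-separation equivalence of~\cite{Eisenbrand2003a} finishes. That is all the paper does; the corollary is stated without a separate proof and is justified by the sentence immediately preceding it.

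Your third paragraph chases a claim the paper does not actually make. The paper does \emph{not} argue that the entire optimization uses only $|V|+2|E|$ min-cuts in total; it simply says ``Theorem~\ref{thm:primal_sep_col} is applicable and it follows from~\cite{Eisenbrand2003a} that the matching problem can be solved by at most $|V|+2|E|$ min-cut computations.'' The $|V|+2|E|$ figure is the cost of one primal-separation call, and the polynomial-time conclusion comes from the generic equivalence, which will invoke that oracle polynomially many times. There is no single-pass accounting in the paper, and no attempt to avoid the multiplicative overhead of the reduction. So drop the third paragraph: the obstacle you anticipate is not something the paper overcomes, and the corollary should be read with the $|V|+2|E|$ bound attached to primal separation rather than to the full optimization.
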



%
\subsection{At most two odd entries per row}
In the case of at most two odd entries per row, we follow a similar approach as before. This time, the nodes will represent coordinates, and the edges will represent constraints. We will once again only consider constraints that are active at $\hat{x}$ and the slack constraint $j^*$.

That is, we introduce a special node, denoted as $t$, along with a node for each coordinate. A pair of coordinate nodes will be connected by an edge if they both appear as odd entries in the same constraint. The length of these edges is determined by their respective slack values.

Additionally, for each constraint that contains only a single odd entry (including the box constraints), we introduce an edge between the respective coordinate node and the special node $t$. The length of these edges is determined by the corresponding slack value, which, in the case of box constraints, is either $x^*$ or $1 - x^*$.

Since $\lambda A \in \mathbb{Z}^n$ is equivalent to $(2\lambda) A \equiv 0 \pmod{2}$, we require that the sum of the selected inequalities for each coordinate is even. In our context, where coordinates are represented by nodes, this means that we have to select edges in such a way that the degree of each node is even. This implies that we are looking for unions of cycles in the graph $G$. In our construction, every edge is assigned a non-negative length, making it sufficient to search for the shortest cycle containing the edge corresponding to constraint $j^*$. As this edge is always part of such a cycle, the final problem that has to be solved is a shortest path problem between the nodes connected by this edge. Furthermore, the length of such a cycle is equal to the left-hand side in \eqref{eq:cut_off}. More formally

\begin{proof}[of Theorem \ref{thm:primal_sep_row}]
   Let $\mathcal{S} = \{j \in [m] \colon  a_j^T \hat{x} = b_j - 1\}$ and $\mathcal{T} = \{j \in [m] \colon  a_j^T \hat{x} = b_j\}$. The nodes of the graph are given by $V = [n] \, \dot\cup \, \{t\}$ and the edges representing an inequality that has two odd entries connect the involved coordinate nodes, that is,
    \[
        E_1 = \{\{v,w\}^e \vert \, e \in \mathcal{T} \colon a_{e,v} \equiv a_{e,w} \equiv 1 \pmod{2}\}.
    \]
    The constraints from $Ax \leq b$, that have only a single odd entry, connect the respective coordinate node and $t$, i.e.,
    \[
        E_2 = \{\{v,t\}^e \vert \, e \in \mathcal{T}, \forall w \in [n] \setminus \{v\} \colon a_{e,v} \equiv 1 \land a_{e, w} \equiv 0 \pmod{2}\}.
    \]
    Finally, the tight box-constraint will be represented by $E_3$, defined as
    \[
        E_3 = \{\{v,t\} \vert \, v \in [n]\}.
    \]
    Altogether, the edges of the graph are given by
    \[
        E = E_1 \, \dot\cup \, E_2 \, \dot\cup \, E_3.
    \]
    Then, as described above, the edge lengths are defined as
    \[
        l(e) = \begin{cases}
            (b - A x^*)_e &\text{if } e \in E_1 \cup E_2\\
            x_v^* &\text{if } e \in E_3, \hat{x}_v = 0 \\
            1 - x_v^* &\text{if } e \in E_3, \hat{x}_v = 1.
        \end{cases}
    \]
    Note that, similar as before, the cycles containing the edge corresponding to inequality $j^*$ are in one-to-one correspondence to the non-trivial $\{0, \frac{1}{2}\}$-cuts that are tight at $\hat{x}$ and contain constraint $j^*$. Furthermore, $l(C) = \chi_U^T(b - A x^*) + 2\mu_{\text{down}} (0 - (-x^*)) + 2\mu_{\text{up}} (1 - x^*)$ for each such cycle. 
    
    Therefore we are looking for a minimum-length cycle of this type which is equivalent to looking for a shortest $v^*$-$w^*$-path, where $v^*$ and $w^*$ are the end points of the edge belonging to $j^*$.

    It follows that there exists a non-trivial $\{0, \frac{1}{2}\}$-cut that is tight at $\hat{x}$, contains $j^*$ and cuts off $x^*$ if and only the shortest $v^*$-$w^*$-path is of length less than $1 - (b - A x^*)_{j^*}$.

    The case when a box-constraint is selected as the tight inequality works analogously.
    
    In total, at most $m + n$ shortest-paths have to be computed. \qed
\end{proof}

Similar to our previous approach, the inclusion of box constraints is not necessary in this construction. If certain box constraints are missing, we can handle this by omitting the corresponding edges in the graph.
%
\section{An Approximation Scheme for $P_{\frac{1}{2}}$}

In this section, we present an approximation scheme for $P_{\frac{1}{2}}$. 

\begin{definition}\label{def:PTAS}
    Given a polytope $Q = \{x \in \mathbb{R}^n \colon x \geq 0, Ax \leq b\}$ and an objective function vector $c$, where $A$, $b$, and $c$ are integer and $b \geq 0$, 
    a polynomial time approximation scheme (PTAS) for $Q$ is an algorithm that, given a fixed $\varepsilon > 0$, runs in polynomial time in the encoding size of the input and returns a number $\alpha(Q, c)$ such that
    \[
       \max_{x \in Q} c^T x \leq \alpha(Q, c) \leq (1+\varepsilon)  \max_{x \in Q} c^T x.
    \]
\end{definition}
Observe that $0 \in Q$, which is needed for a sound definition of approximation and, e.g., implies that $Q \subseteq (1 + \varepsilon) Q$ for $\varepsilon \geq 0$.


Note that if we can construct a polytope $\tilde{Q}$ such that $Q \subseteq \tilde{Q} \subseteq (1 + \varepsilon) \,Q$ and we can solve the optimization problem over $\tilde{Q}$ in polynomial time, then we obtain a PTAS for $Q$. 

In our setting, it turns out that it suffices to consider $\{0, \frac{1}{2}\}$-cuts such that $\lambda^T \mathbbm{1} \leq k$ for sufficiently large but fixed $k$. This implies that we only need to consider $\{0, \frac{1}{2}\}$-cuts derived from at most $2k$ original inequalities.


The following construction is inspired by the relaxation of the matching polytope that only considers the blossom inequalities of size at most $k$, see, e.g., \cite{Braun2015a,Rothvoss2017a,Sinha2018}\footnote{An earlier conference versions appeared in \cite{ConfBraunP15,ConfRothvossMatching}}.

\begin{theorem}\label{thm:PTAS}
     Let $P = \{x \in \mathbb{R}^n \colon Ax \leq b\}$, with $A$ and $b$ integer. If $b > 0$, there is a PTAS for $P_{\frac{1}{2}}$.
\end{theorem}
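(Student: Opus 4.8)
\textbf{Proof strategy for Theorem~\ref{thm:PTAS}.}

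The plan is to construct a relaxation $\tilde{P}$ satisfying $P_{\frac{1}{2}} \subseteq \tilde{P} \subseteq (1+\varepsilon)\,P_{\frac{1}{2}}$ over which optimization is tractable, and then invoke the remark immediately preceding the theorem that such a sandwich yields a PTAS. The key idea, as signalled by the paragraph before the statement, is to define $\tilde{P}$ using only the \emph{bounded} $\{0,\frac{1}{2}\}$-cuts, namely those with $\lambda^T \mathbbm{1} \leq k$ for a fixed threshold $k = k(\varepsilon)$. Since each such cut uses at most $2k$ of the original inequalities, there are only $O(m^{2k})$ of them, and each is found by checking the $\binom{m}{\le 2k}$ subsets and testing integrality of $\lambda^T A$; for fixed $\varepsilon$ (hence fixed $k$) this is polynomial. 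Optimization over $\tilde{P}$ is then just an LP with polynomially many constraints, so the tractability half is immediate once $k$ is chosen.

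The substance of the argument is the right-hand inclusion $\tilde{P} \subseteq (1+\varepsilon)\,P_{\frac{1}{2}}$, equivalently showing that the \emph{long} cuts (those with $\lambda^T \mathbbm{1} > k$), which we discard in $\tilde{P}$, are nearly implied by the bounded ones up to the scaling factor $(1+\varepsilon)$. First I would take an arbitrary $\{0,\frac{1}{2}\}$-cut $\lambda^T A x \le \lfloor \lambda^T b\rfloor$ with $\lambda$ supported on a set $S$ of size $2\ell > 2k$, and compare it against the weaker but valid aggregated inequality $\lambda^T A x \le \lambda^T b$ that one gets \emph{without} rounding (this is just a nonnegative combination of original inequalities, hence valid for all of $P \supseteq P_{\frac{1}{2}}$, in fact for every point of $\tilde{P}$ as well). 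The rounding gap between these two is exactly $\lambda^T b - \lfloor \lambda^T b\rfloor = \frac{1}{2}$. The goal is to argue that at a point $x \in \tilde{P}$ the scaled point $\frac{1}{1+\varepsilon}x$ satisfies the long cut. Here the hypothesis $b > 0$ is essential: because $\lambda = \frac{1}{2}\chi_S$ and $b$ is a positive integer vector, we have $\lambda^T b = \frac{1}{2}\sum_{j\in S} b_j \ge \frac{1}{2}\,|S| = \ell > k$, so the right-hand side of the unrounded inequality grows linearly in the support size. Consequently the relative error introduced by rounding, $\tfrac{1}{2}/\lambda^T b \le \tfrac{1}{2}/\ell < \tfrac{1}{2k}$, can be forced below $\varepsilon$ by choosing $k \ge \tfrac{1}{2\varepsilon}$.

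Concretely, I would fix $k := \lceil 1/(2\varepsilon)\rceil$ (or a comparable bound) and argue as follows: for $x \in \tilde{P}$ and any discarded long cut with multiplier $\lambda$ on support of size $2\ell \ge 2k$, the unrounded inequality $\lambda^T A x \le \lambda^T b$ holds because it is a conic combination of inequalities that are valid for $P_{\frac{1}{2}}$ (and one checks it remains valid on $\tilde{P}$, either because $\tilde{P}\subseteq P$ by including the original system, or by only aggregating constraints retained in $\tilde{P}$). Then
\[
\lambda^T A x \;\le\; \lambda^T b \;=\; \lfloor \lambda^T b\rfloor + \tfrac{1}{2} \;\le\; \bigl(1 + \tfrac{1}{2\lfloor\lambda^T b\rfloor}\bigr)\lfloor\lambda^T b\rfloor \;\le\; (1+\varepsilon)\,\lfloor\lambda^T b\rfloor,
\]
using $\lfloor\lambda^T b\rfloor \ge \ell - 1 \ge k-1$ and the choice of $k$; this says precisely that $\tfrac{1}{1+\varepsilon}x$ satisfies the long cut, and since it also satisfies every bounded cut (those define $\tilde P$ and scaling down by $\tfrac{1}{1+\varepsilon}\le 1$ preserves $\le$-inequalities with nonnegative right-hand side) together with the original system, we get $\tfrac{1}{1+\varepsilon}x \in P_{\frac{1}{2}}$, i.e. $\tilde{P}\subseteq(1+\varepsilon)P_{\frac{1}{2}}$.

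\textbf{Main obstacle.} I expect the delicate point to be the bookkeeping around which inequalities may legitimately be aggregated to certify the unrounded bound on $\tilde{P}$, and the handling of the box constraints / nonnegativity needed to keep the scaling argument sound (the remark after Definition~\ref{def:PTAS} stressing $0 \in Q$ is exactly what guarantees $\tfrac{1}{1+\varepsilon}x$ behaves well). One must ensure that the retained bounded cuts plus the original system $Ax \le b$ suffice to imply each long cut after scaling, and that the reduction to $b > 0$ (via projecting out variables for the monotone case mentioned in the introduction) does not reintroduce constraints violating positivity. Making the constant $k$ explicit and uniform over all long cuts — rather than cut-dependent — is the part that requires the most care, but the strict positivity of $b$ is precisely the lever that makes a single global $k$ work.
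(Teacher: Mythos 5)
Your proposal follows essentially the same route as the paper's proof: the same truncated closure $\tilde{P}$ consisting of $P$ plus all $\{0,\tfrac{1}{2}\}$-cuts with $\lambda^T \mathbbm{1} \le k(\varepsilon)$, solved as an LP with $\mathcal{O}(m^{2k})$ constraints, and the same scaling argument showing $\tfrac{1}{1+\varepsilon}x \in P_{\frac{1}{2}}$ for $x \in \tilde{P}$, with $b \ge \mathbbm{1}$ used exactly as in the paper to tie support size to the right-hand side (the paper merely splits cases on $a_0 \le k(\varepsilon)$ versus $a_0 \ge k(\varepsilon)$ with $k(\varepsilon) = \lceil 1 + \tfrac{1}{\varepsilon}\rceil$ instead of on support size, which is equivalent). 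The one quantitative slip is that with your $k = \lceil 1/(2\varepsilon)\rceil$ the stated bound $\lfloor \lambda^T b\rfloor \ge k-1$ is too weak (for $\varepsilon \ge \tfrac{1}{2}$ it degenerates to $0$); replace it by $\lfloor \lambda^T b\rfloor \ge k$, which your own estimate $\lambda^T b \ge \ell \ge k + \tfrac{1}{2}$ together with half-integrality of $\lambda^T b$ already yields, or enlarge $k$ slightly as the paper does.
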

\begin{proof}
The proof uses the fact that the contribution to the right-hand side of each inequality that is selected for a $\{0, \frac{1}{2}\}$-cut can be lower-bounded by $\frac{1}{2}$ because $b \geq 1$. Let us consider
    \[
        \tilde{P} := \{x \in P \colon \lambda^T A x \leq \lfloor \lambda^T b \rfloor,  \lambda^T \mathbbm{1} \leq k(\varepsilon), \lambda \in \{0, \tfrac{1}{2}\}^m\}.
    \]
    with $k(\varepsilon) = \lceil 1 + \frac{1}{\varepsilon}\rceil$. That is, $\tilde{P}$ is the intersection of $P$ with all $\{0, \frac{1}{2}\}$-cuts that involve at most $2k(\varepsilon)$ inequalities from $Ax \le b$. 
    Let $\alpha(P, c) := \max\{c^T x \colon x \in \tilde{P}\}$.
    Because $\tilde{P}$ is defined by at most
    $\mathcal{O}(m^{2k})$ inequalities and all coefficients are polynomially bounded, $\alpha(P, c)$ can be determined in polynomial time by solving the linear program over $\tilde{P}$ \cite{Khachiyan1980}.
    Furthermore, it clearly holds that $P_{\frac{1}{2}} \subseteq \tilde{P}$. 
    In the following we show that $\frac{1}{1 + \varepsilon} x^* \in P_{\frac{1}{2}}$ for any $x^* \in \tilde{P}$, 
    which concludes the proof.

    Assume there is a $\{0, \frac{1}{2}\}$-cut $a^T x \leq \lfloor a_0 \rfloor$ that is violated by $\frac{1}{1 + \varepsilon} x^*$.

    First, consider the case that the right-hand side $a_0$ is bounded by $k(\varepsilon)$. As mentioned earlier, this implies that at most $2k(\varepsilon)$ original constraints are involved in this cut. 
    However, this means that the cut is not violated by  $\frac{1}{1 + \varepsilon} x^* \in \tilde{P}$, a contradiction.

    Now suppose that $a_0 \geq k(\varepsilon)$. This implies that $\frac{1}{1 + \varepsilon} a^T x^* \leq \frac{1}{1 + \varepsilon} a_0 \leq \lfloor a_0 \rfloor$. The first inequality holds because $x^* \in \tilde{P} \subseteq P$, and the second
    inequality holds because $a_0 \geq k(\varepsilon) \geq 1 + \frac{1}{\varepsilon}$. This implies that $\frac{1}{1 + \varepsilon} a_0 \leq a_0 - 1 \leq \lfloor a_0 \rfloor$.\qed
\end{proof}

In the context of approximating the matching polytope, Sinha \cite{Sinha2018} showed that the construction above is essentially best possible.

Furthermore, our construction can be extended to the (mod $k$)-closure, which was introduced by Caprara, Fischetti, and Letchford \cite{Caprara2000}\footnote{An earlier conference version appeared in \cite{ConfCapraraModK}}.
For given $k$, it uses multipliers in $\{0, \frac{1}{k}, \ldots, \frac{k-1}{k}\}$.
\begin{definition}
    Given a polytope $P = \{x \in \mathbb{R}^n \colon Ax \leq b\}$, the (mod $k$)-closure $P_{\text{mod }k}$ is defined as
    \[
        P_{\text{mod }k} = \{x \in P \colon \lambda^T A x \leq \lfloor \lambda^T b \rfloor, \lambda \in \{0, \tfrac{1}{k}, \ldots, \tfrac{k-1}{k}\}^m, \lambda^T A \in \mathbb{Z}^n\}.
    \]
\end{definition}

In the proof of Theorem \ref{thm:PTAS}, the key element was the fact that we can find a lower bound for the contribution of an inequality to the right-hand side.
Since for the (mod $k$)-closure we have a bounded number of multipliers and a lower bound of $\frac{1}{k}$ for the multipliers, we obtain the following corollary.
\begin{corollary}
    Let $P = \{x \in \mathbb{R}^n \colon Ax \leq b\}$, with $A$ and $b$ integer. If $b > 0$, there is a PTAS for $P_{\text{mod }k}$.
\end{corollary}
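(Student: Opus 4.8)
The plan is to mimic the structure of the proof of Theorem \ref{thm:PTAS} as closely as possible, replacing the specific constants with their analogues for the (mod $k$)-closure. I would begin by defining a truncated relaxation
\[
    \tilde{P} := \{x \in P \colon \lambda^T A x \leq \lfloor \lambda^T b \rfloor, \ \lambda^T \mathbbm{1} \leq K(\varepsilon), \ \lambda \in \{0, \tfrac{1}{k}, \ldots, \tfrac{k-1}{k}\}^m\},
\]
where $K(\varepsilon)$ is a suitable threshold on the total multiplier weight, to be pinned down below. As in the original proof, since each multiplier is at least $\frac{1}{k}$, a bound $\lambda^T \mathbbm{1} \leq K(\varepsilon)$ restricts the number of involved original inequalities to at most $kK(\varepsilon)$, so that $\tilde{P}$ is described by $\mathcal{O}(m^{kK(\varepsilon)})$ inequalities with polynomially bounded coefficients. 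For fixed $k$ and $\varepsilon$ this is polynomial, so $\alpha(P, c) := \max\{c^T x \colon x \in \tilde{P}\}$ is computable in polynomial time by linear programming, and trivially $P_{\text{mod }k} \subseteq \tilde{P}$.

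The heart of the argument is again to show $\frac{1}{1+\varepsilon} x^* \in P_{\text{mod }k}$ for every $x^* \in \tilde{P}$. I would argue by contradiction: suppose some (mod $k$)-cut $a^T x \leq \lfloor a_0 \rfloor$, with $a = \lambda^T A$ and $a_0 = \lambda^T b$, is violated by $\frac{1}{1+\varepsilon} x^*$. Here the key quantitative fact is that $b \geq 1$ together with $\lambda_j \geq \frac{1}{k}$ for each involved inequality gives $a_0 = \lambda^T b \geq \frac{1}{k} \lambda^T \mathbbm{1}$, so controlling $a_0$ is equivalent (up to the factor $k$) to controlling $\lambda^T \mathbbm{1}$. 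This lets me split into the same two cases. In the small case, where $a_0$ is below a chosen threshold, the cut involves few enough inequalities that $\lambda^T \mathbbm{1} \leq K(\varepsilon)$, so the cut is already present in the description of $\tilde{P}$ and cannot be violated by $x^* \in \tilde{P}$ — contradiction. In the large case, where $a_0$ is large, I use $x^* \in \tilde{P} \subseteq P$ to write $\frac{1}{1+\varepsilon} a^T x^* \leq \frac{1}{1+\varepsilon} a_0$, and then verify the purely arithmetic inequality $\frac{1}{1+\varepsilon} a_0 \leq \lfloor a_0 \rfloor$, which holds once $a_0 \geq 1 + \frac{1}{\varepsilon}$ since then $\frac{1}{1+\varepsilon} a_0 \leq a_0 - 1 \leq \lfloor a_0 \rfloor$.

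The one point requiring care — and the main obstacle, though a mild one — is calibrating the constants so the two cases join up cleanly. The natural choice is to set the threshold so that "large $a_0$" means $a_0 \geq 1 + \frac{1}{\varepsilon}$, matching the arithmetic step; since $a_0 \leq \lambda^T b$ and each multiplier contributes at least $\frac{1}{k}$, I would choose $K(\varepsilon)$ so that $\lambda^T \mathbbm{1} \leq K(\varepsilon)$ forces $a_0 < 1 + \frac{1}{\varepsilon}$ whenever we are not in the large case. Concretely, taking $K(\varepsilon) = \lceil k(1 + \frac{1}{\varepsilon}) \rceil$ suffices, since then any cut with $a_0 \geq 1 + \frac{1}{\varepsilon}$ is handled by the arithmetic case, while any cut with $a_0 < 1 + \frac{1}{\varepsilon}$ has $\lambda^T \mathbbm{1} \leq k a_0 < k(1 + \frac{1}{\varepsilon}) \leq K(\varepsilon)$ and hence lies in $\tilde{P}$. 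The rest is a direct transcription of the $k = 2$ argument, and I would keep the exposition brief by simply pointing to the proof of Theorem \ref{thm:PTAS} and recording only the modified constants.
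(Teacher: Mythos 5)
Your proposal is correct and takes essentially the same route as the paper, which obtains the corollary by rerunning the proof of Theorem~\ref{thm:PTAS} with the multiplier lower bound $\frac{1}{k}$ in place of $\frac{1}{2}$. Your threshold $K(\varepsilon) = \lceil k(1 + \frac{1}{\varepsilon})\rceil$ is looser than necessary --- since $b \geq \mathbbm{1}$ gives $\lambda^T \mathbbm{1} \leq \lambda^T b = a_0$ directly, the original choice $\lceil 1 + \frac{1}{\varepsilon}\rceil$ already suffices --- but this only affects constants, not correctness.
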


If we assume $A \geq 0$, i.e., we are considering monotone (or packing-type) polytopes, we can w.l.o.g. assume that $b \geq \mathbbm{1}$. This holds because if $b_j = 0$ for some $j \in [m]$, we can project out all variables appearing in this constraint with non-zero coefficient. This shows that our construction also works for all monotone polytopes.
Moreover, when dealing with a fixed number $l$ of inequalities where the right-hand side equals zero, we can address this by increasing the value of $k(\varepsilon)$ by $l$. However, it remains unclear how to deal with a non-constant number of such inequalities.

%
\section{Conclusion}
We want to mention that, in general, the separation problem for $P_{\frac{1}{2}}$ is already hard if there are exactly three odd entries per column. This is implicit in a construction by Brugger and Schulz 
\cite{Brugger2022}. Furthermore, because the stable set problem is $\mathcal{NP}$-complete even when restricted to graphs whose degree is bounded by $3$ \cite{Garey1974}, 
their construction also implies that separation is hard even if all rows except one have at most three odd entries. 
A similar result (with four odd entries per column) can also be obtained by a reduction from Exact 3-Cover, using a construction similar to the one given in \cite{Caprara1996}.
 
Furthermore, it stays open how the primal separation procedures might be extended to the (mod $k$)-closure. In this setting, \textit{odd} has to be understood as not equivalent to $0$ mod $k$. 
In the construction above, the multipliers are implicitly determined by the set of involved inequalities, but when considering the (mod $k$)-closure, selecting the multipliers gives 
more freedom, which makes looking for a violated cut harder. However, maximally violated (mod $k$)-cuts, i.e., cuts that are 
violated by $\frac{k-1}{k}$ by the given fractional point, can be found in polynomial time, as was shown by Caprara, Fischetti, and Letchford \cite{Caprara2000}.

\subsubsection{Acknowledgements}The authors thank Matthias Brugger for interesting discussions and helpful comments and suggestions. This work was supported by the Deutsche Forschungsgemeinschaft (DFG, German Research Foundation) - Project 277991500.
%
%
%
\bibliographystyle{splncs04}
\bibliography{approx_and_primal_sep}
\end{document}